\newtheorem{theorem}{Theorem}
\newtheorem{lemma}{Lemma}
\newtheorem{corollary}{Corollary}
\newtheorem{remark}{Remark}
\newenvironment{sciabstract}{%
\begin{quote} \bf}
{\end{quote}}
\newcounter{lastnote}
\title{A Note on the Convexity of $\log \det ( I  + KX^{-1} )$ and its Constrained Optimization Representation}
\author
{Kwang-Ki K. Kim$^{1\ast}$ \\
\normalsize{$^{1}$School of Electrical and Computer Engineering, Georgia Institute of Technology}
\\
\\
\normalsize{$^\ast$Correspondence: {\tt kwangki.kim@ece.gatech.edu}}
}
\date{}
\begin{document} 


\baselineskip24pt


\maketitle


\begin{sciabstract} \rm
This note provides another proof for the {\em convexity} ({\em strict convexity}) of $\log \det ( I  + KX^{-1} )$
over the positive definite cone for any given positive semidefinite matrix $K \succeq 0$ (positive definite matrix $K \succ 0$)
and the {\em strictly convexity} of $\log \det (K + X^{-1})$ over the positive definite cone for any given $K \succeq 0$.
Equivalent optimization representation with linear matrix inequalities (LMIs) for the functions $\log \det ( I  + KX^{-1} )$ and $\log \det (K + X^{-1})$ are presented.
Their optimization representations with LMI constraints can be particularly useful for some related synthetic design problems.
\end{sciabstract}



\section*{Results}

\begin{theorem}\label{theorem:convexity}
For given $K \succeq 0$ the function
\[
f(X) = \log \det ( I  + KX^{-1} )
\]
is convex over the positive semidefinite cone $\mathbb S_{++}^{n} = \{ X \in \mathbb{C}^{n \times n} : X = X^* \succ 0 \}$
and strictly convex if $K \succ 0$.
\end{theorem}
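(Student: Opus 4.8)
The plan is to first simplify the function. Using the factorization $I + KX^{-1} = (X+K)X^{-1}$ together with multiplicativity of the determinant, I would rewrite
\[
f(X) = \log\det(X+K) - \log\det X .
\]
This exhibits $f$ as a difference of two concave functions, since both $X \mapsto \log\det(X+K)$ and $X \mapsto \log\det X$ are concave on the cone. Consequently convexity is \emph{not} immediate from this decomposition and must instead come from comparing the curvatures of the two terms.

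Next I would reduce to a one-dimensional problem. Because $\mathbb{S}_{++}^{n}$ is an open convex set, convexity of $f$ is equivalent to convexity of $g(t) = f(X+tV)$ in $t$ for every base point $X \succ 0$ and every Hermitian direction $V = V^*$; thus it suffices to verify $g''(0) \ge 0$ for all such $X$ and $V$. Using the standard identity that the second derivative of $t \mapsto \log\det(M+tV)$ at $t=0$ equals $-\operatorname{tr}(M^{-1}VM^{-1}V)$, applied to $M = X$ and $M = X+K$, I would obtain
\[
g''(0) = \operatorname{tr}(X^{-1}VX^{-1}V) - \operatorname{tr}\bigl((X+K)^{-1}V(X+K)^{-1}V\bigr).
\]
Convexity then reduces to showing this quantity is nonnegative, and strict convexity to showing it is strictly positive whenever $V \neq 0$.

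Setting $P = X^{-1}$ and $Q = (X+K)^{-1}$, the operator antitonicity of matrix inversion gives $P \succeq Q \succ 0$, since $K \succeq 0$ forces $X+K \succeq X \succ 0$. The heart of the argument, and the step I expect to be the main obstacle, is the monotonicity lemma: \emph{if $P \succeq Q \succ 0$ then $\operatorname{tr}(PVPV) \ge \operatorname{tr}(QVQV)$ for every Hermitian $V$}. I would prove this by regarding $h(M) = \operatorname{tr}(MVMV)$ as a function on the positive semidefinite cone and differentiating along the segment $M(s) = Q + s(P-Q)$, $s \in [0,1]$. A short trace computation, using cyclicity, yields $\tfrac{d}{ds}h(M(s)) = 2\operatorname{tr}\bigl((P-Q)\,VM(s)V\bigr)$; since $P-Q \succeq 0$ and $VM(s)V = VM(s)V^* \succeq 0$, the trace of their product is nonnegative, so $h$ is nondecreasing and $h(P) \ge h(Q)$. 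This gives $g''(0) \ge 0$ and hence convexity.

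Finally, for strict convexity when $K \succ 0$, I would observe that now $P - Q \succ 0$ strictly, so each integrand $\operatorname{tr}\bigl((P-Q)VM(s)V\bigr)$ vanishes only if $VM(s)V = 0$; evaluating at $s=1$, where $M(1) = X^{-1} \succ 0$, this would force $\operatorname{tr}(X^{-1}V^2) = 0$ and hence $V = 0$. Therefore, for $V \neq 0$ the integrand is strictly positive near $s=1$ and, by continuity, $h(P) - h(Q) > 0$, giving $g''(0) > 0$. The only routine technical points are the justification of the second-derivative formula for $\log\det$ and the antitonicity of inversion, both of which are standard.
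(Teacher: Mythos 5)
Your proof is correct, but it takes a genuinely different route from the paper. The paper argues variationally: its Lemma~1 represents $f(X)$ as the optimal value of the semidefinite program $\min_{Z \succ 0} \log\det(Z^{-1})$ subject to the LMI $\left[\begin{smallmatrix} I-Z & K^{1/2} \\ K^{1/2} & X+K \end{smallmatrix}\right] \succeq 0$ (obtained via Sylvester's determinant identity and a Schur complement), after which convexity of $f$ follows from linearity of the constraint in $(X,Z)$ together with the known (strict) convexity of $-\log\det$ on $\mathbb S_{++}^n$; strict convexity for $K \succ 0$ is deduced from injectivity of the minimizer map $X \mapsto Z^{\star}(X)$. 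You instead work by second-order differential calculus: the decomposition $f(X) = \log\det(X+K) - \log\det X$, the Hessian formula $g''(0) = \operatorname{tr}(X^{-1}VX^{-1}V) - \operatorname{tr}\bigl((X+K)^{-1}V(X+K)^{-1}V\bigr)$, and the trace-monotonicity lemma that $P \succeq Q \succ 0$ implies $\operatorname{tr}(PVPV) \ge \operatorname{tr}(QVQV)$, which your derivative computation $\tfrac{d}{ds}\operatorname{tr}(M(s)VM(s)V) = 2\operatorname{tr}\bigl((P-Q)VM(s)V\bigr) \ge 0$ establishes correctly, since the trace of a product of two positive semidefinite matrices is nonnegative. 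Your strict-convexity step is also sound: $K \succ 0$ gives $P - Q \succ 0$ by strict antitonicity of inversion, and $VX^{-1}V \neq 0$ for $V \neq 0$ makes the integrand strictly positive near $s=1$, so $g''(0) > 0$. What each approach buys: yours is essentially self-contained, needing only standard derivative formulas and elementary trace inequalities, whereas the paper's proof explicitly leans on the convexity of $\log\det(X^{-1})$ and concedes it is ``not stand-alone''; moreover you obtain strict convexity from explicit Hessian positivity, which is more transparent than the paper's rather terse claim that $X \neq Y$ implies $Z^{\star}(X) \neq Z^{\star}(Y)$ exactly when $K \succ 0$. Conversely, the paper's route produces as a byproduct the MaxDet/LMI representation of $f$ and $g$, which is the real point of the note for downstream constrained optimization, and which your differential argument does not yield.
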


The first proof for Theorem~\ref{theorem:convexity} was presented by \cite{DiggaviCover} and some other approaches to its proof were given in~\cite{Mao,Kashyap,KimKim}.
The proof in~\cite{DiggaviCover} is an information theoretic approach, \cite{Mao} gives a complicated proof to correct the incomplete proof in \cite{Kashyap}, and
\cite{KimKim} provides a simple proof based on the theory of spectral functions.
We provide a different approach to show (strict) convexity of the above function $f : S_{++}^{n} \rightarrow \mathbb R$
for which the convexity of a companion function $\log \det ( X^{-1} )$ is exploited.\footnote{The convexity of $\log \det ( X^{-1} )$ is not trivial, but nevertheless well known and can be found in many convex analysis textbooks.}
As the convexity of $\log \det ( X^{-1} )$ itself is not trivial,
our proof is not stand-alone and the purpose of this note is not to claim to fame but to draw attentions to some equivalent convex programs with linear matrix inequalities for $f$.

\begin{lemma}\label{lemma:sdp}
For $K \succeq 0$, the value of function $f(X)$ is the same to the optimal value of a semidefinite program for every $X \succ 0$:
\[
\begin{split}
\log \det ( I  + KX^{-1} )  \equiv 
& \min_{Z \succ 0}\, \log \det (Z^{-1}) \\
& \ \mbox{\rm s.t.} \ \left[ \begin{array}{@{}c@{\,\,}c@{}} I - Z & K^{1/2} \\ K^{1/2} & X + K \end{array} \right] \succeq 0 \,.
\end{split}
\]
\end{lemma}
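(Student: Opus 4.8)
The plan is to eliminate the matrix variable $Z$ in closed form and show that the semidefinite program has a unique maximizer at the boundary of its feasible set, whose objective value equals $f(X)$. First I would apply the Schur complement to the LMI with respect to the lower-right block $X+K$, which is positive definite because $X \succ 0$ and $K \succeq 0$. This turns the LMI into the explicit L\"owner-order constraint
\[
0 \prec Z \preceq I - K^{1/2}(X+K)^{-1}K^{1/2} =: \bar Z .
\]
The objective $\log\det(Z^{-1}) = -\log\det Z$ is monotone \emph{decreasing} in $Z$ with respect to the L\"owner order, since for $0 \prec Z_1 \preceq Z_2$ one has $\det Z_1 \le \det Z_2$. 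Hence minimizing $\log\det(Z^{-1})$ amounts to maximizing $\det Z$, and the optimum is attained at the largest feasible point $Z = \bar Z$, provided $\bar Z \succ 0$.

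The next step is to verify $\bar Z \succ 0$, so that the feasible set is nonempty and the monotonicity argument applies at the boundary. Here I would use the similarity $K^{1/2}(X+K)^{-1}K^{1/2}$ versus $(X+K)^{-1}K$: the latter is similar to the Hermitian matrix $(X+K)^{-1/2}K(X+K)^{-1/2}$, which satisfies $(X+K)^{-1/2}K(X+K)^{-1/2} \prec I$ because $K \prec X+K$ (owing to $X \succ 0$). Since $K^{1/2}(X+K)^{-1}K^{1/2}$ and $(X+K)^{-1}K$ share the same nonzero eigenvalues, all eigenvalues of $K^{1/2}(X+K)^{-1}K^{1/2}$ lie strictly below $1$, which gives $\bar Z \succ 0$ and confirms that $\bar Z$ is feasible.

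Finally I would evaluate the optimal objective. Applying the Weinstein--Aronszajn identity $\det(I - K^{1/2}(X+K)^{-1}K^{1/2}) = \det(I - (X+K)^{-1}K)$ together with the algebraic simplification $I - (X+K)^{-1}K = (X+K)^{-1}X$ yields $\det \bar Z = \det X / \det(X+K)$. Therefore
\[
\log\det(\bar Z^{-1}) = \log\frac{\det(X+K)}{\det X} = \log\det\!\big((X+K)X^{-1}\big) = \log\det(I + KX^{-1}) = f(X),
\]
which is the claimed identity. The main obstacle I anticipate is not the determinant bookkeeping, which is routine once $\det(I-AB)=\det(I-BA)$ is invoked, but rather the positive-definiteness of $\bar Z$: it is exactly what guarantees that the reduced problem is feasible and that the determinant-monotonicity reasoning legitimately places the minimizer at $Z=\bar Z$ rather than at an interior point.
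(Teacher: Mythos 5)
Your proof is correct, and although it shares the paper's skeleton---Schur complementation of the LMI and L\"owner monotonicity of the determinant, arriving at the same optimizer $\bar Z = I - K^{1/2}(X+K)^{-1}K^{1/2}$---it closes the loop by a genuinely different route. The paper argues \emph{forward} from the function: Sylvester's determinant theorem rewrites $f(X)=\log\det(I+K^{1/2}X^{-1}K^{1/2})$, the slack variable enters through $Z \preceq (I+K^{1/2}X^{-1}K^{1/2})^{-1}$, and the matrix inversion lemma identifies that inverse with $\bar Z$; on this route both $\bar Z \succ 0$ and the optimal value $f(X)$ come for free by construction. You argue \emph{backward} from the SDP: you eliminate $Z$ in closed form and evaluate $\det\bar Z = \det X/\det(X+K)$ directly via $\det(I-AB)=\det(I-BA)$ and the factorization $I-(X+K)^{-1}K=(X+K)^{-1}X$, never invoking Woodbury; the price, as you correctly anticipate, is that $\bar Z \succ 0$ then needs its own argument, and your similarity/eigenvalue reasoning (nonzero spectrum of $K^{1/2}(X+K)^{-1}K^{1/2}$ equals that of $(X+K)^{-1}K$, which is similar to $(X+K)^{-1/2}K(X+K)^{-1/2} \prec I$) supplies it cleanly. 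Your version has two incidental advantages: attainment of the minimum is exhibited explicitly rather than inferred from the paper's compactness remark, which is in fact slightly loose---the feasible set $\{Z \succ 0 : Z \preceq \bar Z\}$ is bounded but not closed, and the minimum is attained only because $-\log\det$ diverges as $Z$ approaches a singular matrix; and sharpening your monotonicity step to its strict form ($0 \prec Z \preceq \bar Z$ with $Z \neq \bar Z$ forces $\det Z < \det \bar Z$) would yield uniqueness of the minimizer $Z^{\star}(X)$ as a byproduct, a fact the paper needs later in the strict-convexity half of Theorem~\ref{theorem:convexity}.
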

\begin{proof}[Proof of Lemma~\ref{lemma:sdp}]
From the Sylvester's determinant theorem,
\[
\log \det ( I  + KX^{-1} ) = \log \det ( I  + K^{1/2} X^{-1} K^{1/2} ) \,.
\]
Introducing a slack variable $Z \succ 0$ satisfying the inequality $(I  + K^{1/2} X^{-1} K^{1/2}) \preceq Z^{-1}$, we have the following equivalent representations:
\[
\begin{split}
(I  + K^{1/2} X^{-1} K^{1/2})^{-1} \succeq Z 
& \ \Leftrightarrow \ I - K^{1/2} ( X + K)^{-1} K^{1/2} \succeq Z \\
& \ \Leftrightarrow \ \left[ \begin{array}{@{}c@{\,\,}c@{}} I - Z & K^{1/2} \\ K^{1/2} & X + K \end{array} \right] \succeq 0
\end{split}
\]
where the last equivalence follows from the Schur complement.
Due to the monotonicity of $\log \det (\cdot)$, the minimum $Z \succ 0$ satisfying the equality gives the value of $\log \det ( I  + KX^{-1} )$ for all $X$.
Since the constraint set is compact for all $X \succ 0$ and $K \succeq 0$
and $- \log \det: \mathbb S_{++}^{n} \rightarrow \mathbb R$ is strictly convex,
the minimum denoted by $Z^{\star}(X)$ always exists and is unique for every $X \succ 0$.
\end{proof}

\begin{proof}[Proof of Theorem~\ref{theorem:convexity}]
From Lemma~\ref{lemma:sdp} and linearity of the constraint, for all $\lambda \in [0,1]$ and $X,Y \succ 0$
\[
\begin{split}
f( \lambda X + (1- \lambda) Y) 
& = 
\min_{Z \succ 0}\ \log \det (Z^{-1}) \\
& \ \ \ \ \ \ \   \mbox{\rm s.t.} \ \,  \lambda \left[ \begin{array}{@{}c@{\,\,}c@{}} I - Z & K^{1/2} \\ K^{1/2} & X + K \end{array} \right]  
+ (1-\lambda) \left[ \begin{array}{@{}c@{\,\,}c@{}} I - Z & K^{1/2} \\ K^{1/2} & Y + K \end{array} \right]  \succeq 0 \\
& \leq
 - \log \det \left( \lambda Z^{\star}(X) + (1-\lambda) Z^{\star}(Y) \right) \\
 & \leq
  - \lambda \log \det (Z^{\star}(X)) - (1-\lambda) \log \det (Z^{\star}(Y))  \\
  & = 
  \lambda f(X) + (1-\lambda) f(Y)
\end{split}
\]
for which the second inequality is due to the (strict) convexity of $- \log \det (\cdot)$.
Since $- \log \det: \mathbb S_{++}^{n} \rightarrow \mathbb R$ is strictly convex, it is straightforward that
$\log \det ( I  + KX^{-1} )$ is {\em strictly convex} if $X \neq Y$ implies $Z^{\star}(X) \neq Z^{\star}(Y)$, which is equivalent to the condition $K \succ 0$.
\end{proof}

\begin{corollary}\label{corollary:sdp:strictconvex}
For given $K \succeq 0$ the function 
\[
g(X) = \log \det ( K + X^{-1} )
\]
is {\em strictly convex} over $\mathbb S_{++}^{n}$.
\end{corollary}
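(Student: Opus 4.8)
The plan is to mirror the two-step strategy used for Theorem~\ref{theorem:convexity}: first produce a variational (semidefinite) representation of $g$ in which $-\log\det$ of an auxiliary variable is minimized over a \emph{convex} feasible set, and then run the same convex-combination/Jensen chain as in the proof of Theorem~\ref{theorem:convexity}. Concretely, since $\log\det$ is monotone on $\mathbb S_{++}^n$, for every $X \succ 0$ one has the representation
\[
g(X) = \log\det(K + X^{-1}) = \min_{Z \succ 0}\Big\{ \log\det(Z^{-1}) \ : \ Z \preceq (K + X^{-1})^{-1} \Big\},
\]
the minimum being attained at the unique point $Z^{\star}(X) = (K + X^{-1})^{-1}$. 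This is the exact analogue of Lemma~\ref{lemma:sdp}; as there, the constraint $Z \preceq (K + X^{-1})^{-1}$ can be turned into a matrix inequality by the matrix inversion lemma together with a Schur complement.

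First I would establish that the feasible region $\{(X,Z) : Z \preceq (K + X^{-1})^{-1}\}$ is convex, which is the property that played the role of ``linearity of the constraint'' in Theorem~\ref{theorem:convexity}. This reduces to the operator concavity of the map $X \mapsto (K + X^{-1})^{-1}$ on $\mathbb S_{++}^n$: for $K \succ 0$ this map is the parallel sum of $X$ and $K^{-1}$, hence operator concave, and the case $K \succeq 0$ follows by the limit $K + \epsilon I \downarrow K$. Granting this, the proof of Theorem~\ref{theorem:convexity} transcribes verbatim: for $X,Y \succ 0$ and $\lambda \in [0,1]$, operator concavity makes $Z_0 := \lambda Z^{\star}(X) + (1-\lambda)Z^{\star}(Y)$ feasible at $\lambda X + (1-\lambda)Y$, so that
\[
g(\lambda X + (1-\lambda) Y) \le -\log\det Z_0 \le -\lambda\log\det Z^{\star}(X) - (1-\lambda)\log\det Z^{\star}(Y) = \lambda g(X) + (1-\lambda) g(Y),
\]
the second inequality being the (strict) convexity of $-\log\det$.

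The key point for \emph{strict} convexity, and the reason the hypothesis $K \succ 0$ of Theorem~\ref{theorem:convexity} can be dropped, is that here the optimizer $Z^{\star}(X) = (K + X^{-1})^{-1}$ is \emph{injective} in $X$ for every $K \succeq 0$: if $Z^{\star}(X) = Z^{\star}(Y)$ then $K + X^{-1} = K + Y^{-1}$, the additive $K$ cancels, and $X = Y$. Thus $X \neq Y$ forces $Z^{\star}(X) \neq Z^{\star}(Y)$, and the strict convexity of $-\log\det$ makes the second inequality strict. As an independent sanity check when $K \succ 0$, the identity $g(X) = \log\det K + \log\det(I + K^{-1} X^{-1})$ exhibits $g$ as $\log\det K$ plus the function $f$ of Theorem~\ref{theorem:convexity} with $K$ replaced by $K^{-1} \succ 0$, which is already strictly convex.

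I expect the main obstacle to be the first step. Unlike Lemma~\ref{lemma:sdp}, where the matrix inversion lemma turned $(I + K^{1/2}X^{-1}K^{1/2})^{-1}$ into an expression \emph{affine} in $X$ inside a single inverse, here both $X$ and $Z$ naturally enter through their inverses (the constraint is $Z^{-1} - X^{-1} \succeq K$), so the Schur-complement linearization is less immediate; the cleanest route is to argue convexity of the feasible set directly via operator concavity of $X \mapsto (K + X^{-1})^{-1}$, taking care that both this concavity and the injectivity of $Z^{\star}$ survive the passage to a singular $K$, since that passage is exactly what secures strict convexity for all $K \succeq 0$ rather than only $K \succ 0$.
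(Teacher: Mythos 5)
Your proof is correct, and it diverges from the paper at exactly the step you flagged as the obstacle. The paper handles the corollary by pushing the Lemma~\ref{lemma:sdp} machinery through unchanged: applying the matrix inversion lemma to write $(K+X^{-1})^{-1} = X - XK^{1/2}(I + K^{1/2}XK^{1/2})^{-1}K^{1/2}X$, the constraint $Z \preceq (K+X^{-1})^{-1}$ Schur-complements into
\[
\left[ \begin{array}{@{}c@{\,\,}c@{}} X - Z & X K^{1/2} \\ K^{1/2} X & I + K^{1/2} X K^{1/2} \end{array} \right] \succeq 0 \,,
\]
which --- contrary to your expectation that the linearization is ``less immediate'' here --- is jointly \emph{affine} in $(X,Z)$, since $X$ enters every block linearly. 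So the paper gets convexity of the feasible set for free, with no operator-concavity input and no $\epsilon$-regularization of $K$, and the displayed LMI doubles as the MaxDet representation used in the applications section. Your alternative --- convexity of $\{(X,Z) : Z \preceq (K+X^{-1})^{-1}\}$ via operator concavity of the parallel sum $X \mapsto (K + X^{-1})^{-1}$ for $K \succ 0$, then the limit $K + \epsilon I \downarrow K$ (legitimate, since $K + X^{-1} \succeq X^{-1} \succ 0$ keeps inversion continuous in the limit and the semidefinite order passes to limits) --- is equally valid, but it leans on the Anderson--Duffin/Ando concavity theorem, whose standard proof is essentially the Schur-complement computation above, so it is a heavier tool for the same geometry; note also that injectivity of $Z^{\star}$ needs no limiting argument at all, as it holds directly for singular $K$. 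Where your write-up adds genuine value is the strictness analysis: the paper asserts strict convexity for all $K \succeq 0$ without comment, while you isolate the correct mechanism --- the optimizer $Z^{\star}(X) = (K + X^{-1})^{-1}$ is injective in $X$ for \emph{every} $K \succeq 0$ (the additive $K$ cancels), unlike $Z^{\star}(X) = (I + K^{1/2}X^{-1}K^{1/2})^{-1}$ in Theorem~\ref{theorem:convexity}, whose injectivity requires $K \succ 0$. Your sanity check $g(X) = \log\det K + \log\det(I + K^{-1}X^{-1})$ for $K \succ 0$ is also correct, though it of course cannot reach the singular case that the injectivity argument covers.
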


Similar to Lemma~\ref{lemma:sdp}, for $K \succeq 0$ the above function $g(X)$ has the following convex program with linear matrix inequalities:
\[
\begin{split}
\log \det ( K + X^{-1} )  \equiv
& \min_{Z \succ 0}\, \log \det (Z^{-1}) \\
& \ \mbox{\rm s.t.} \ \left[ \begin{array}{@{}c@{\,\,}c@{}} X - Z & X K^{1/2} \\ K^{1/2} X & I + K^{1/2} X K^{1/2} \end{array} \right] \succeq 0 \,.
\end{split}
\]

\section*{The Use of a MaxDet-LMI Representation}
The convex optimization problems with linear matrix inequalities for the functions $f(X) = \log \det ( I  + KX^{-1} )$ and $g(X) = \log \det ( K + X^{-1} )$
can be particularly useful for some optimization related to the variable $X$.
For example, consider a constrained optimization of the form\footnote{This form of optimization is presented in~\cite{Tanaka:2014:TAC} for which the corresponding optimization is to compute the sequential rate distortion function for a stationary Gauss-Markov process with a linear sensing function.}
\[
\begin{split}
& \min_{X \succ 0}\, g(X) \\
& \ \, \mbox{\rm s.t.} \ H (X) \succeq 0 
\end{split}
\]
where $H : \mathbb S_{++}^{n} \rightarrow \mathbb S^{m}$ is a matrix-valued {\em affine} function.
Then this optimization can be equivalently rewritten by negative logarithmic determinant minimization over linear matrix inequalities:
\[
\begin{split}
 \min_{X \succ 0, \, Z\succ 0} & \, -\log \det (Z) \\
 \mbox{\rm s.t.} \ \ \ \ & \ H (X) \succeq 0, \\
& 
\left[ \begin{array}{@{}c@{\,\,}c@{}} X - Z & X K^{1/2} \\ K^{1/2} X & I + K^{1/2} X K^{1/2} \end{array} \right] \succeq 0 \,,
\end{split}
\]
for which a slack variable $Z$ is introduced.
For the function $f(X) = \log \det ( I  + KX^{-1} )$, the same approach can be used.

\begin{remark}
In~\cite{vandenberghe1998determinant}, an overview of the applications of the determinant maximization problem with linear matrix inequalities to the computations of the Gaussian channel capacity is provided.
Similar to the results in~\cite{vandenberghe1998determinant}, additional structures on the matrix $X$ can be straightforwardly imposed in our minimization problems with linear matrix inequalities for the functions $f(X) = \log \det ( I  + KX^{-1} )$ and $g(X) = \log \det ( K + X^{-1} )$.
\end{remark}

\section*{Concluding Remarks}

In this note, we provide a matrix algebra approach to prove the convexity of $\log \det ( I + K X^{-1})$ and $\log \det ( K + X^{-1})$ for $K \succeq 0$
and the strict convexity counterpart. The proof does not stand alone, since it requires the convexity of $\log \det (X^{-1})$.
The method introducing a slack variable gives equivalent convex programs with linear matrix inequalities for those functions,
which can be used for constrained optimization with variable $X \succ 0$.


\bibliography{convexity-sdp}
\bibliographystyle{alpha}




\end{document}